


\documentclass[pra,reprint]{revtex4-1}
\usepackage{graphicx,epstopdf}

\usepackage{amsmath,amsfonts,amsthm}
\usepackage{enumitem}
\usepackage{scalerel}

\usepackage{subcaption}

\newcommand{\mc}{\mathcal}

\newcommand{\ket}[1]{|#1\rangle}

\newcommand{\bkt}[2]{\langle#1|#2\rangle}
\newcommand{\nrm}[1]{\left\| #1 \right\|}

\newtheorem{corollary}{Corollary}
\newtheorem{theorem}{Theorem}

\begin{document}


\title{A $\psi$-Ontology Result without the Cartesian Product Assumption}


\author{Wayne C. Myrvold}
\affiliation{Department of Philosophy, The University of Western Ontario}
\email{wmyrvold@uwo.ca}


\date{\today}

\begin{abstract}
We introduce a weakening of the Preparation Independence Postulate of Pusey, Barrett, and Rudolph that does not presuppose that the space of ontic states resulting from a product state preparation can be represented by the Cartesian product of subsystem state spaces.  On the basis of this weakened assumption, it is shown that, in any model that  reproduces the quantum probabilities, any pair of pure quantum states $\ket{\psi}$, $\ket{\phi}$ with $\bkt{\phi}{\psi} \leq 1/\sqrt{2}$  must be ontologically distinct.
\end{abstract}


\maketitle


\section{Introduction} The Pusey-Barrett-Rudolph (PBR) theorem \cite{PBR} demonstrates, against the background of the ontological models framework \cite{SpekkensHarrigan}, that from an independence assumption regarding product-state preparations, the Preparation Independence Postulate, it follows   that any pair of distinct pure quantum states are ontologically distinct---that is, for any theory that reproduces the quantum probabilities for outcomes of experiments, the probability distributions over the ontic state space corresponding to preparations of the two states have null overlap.

The context of the theorem is nonrelativistic quantum mechanics, which is known not to be a fundamental theory, but a nonrelativistic limit of a more fundamental theory, quantum field theory, which is itself thought to be a low-energy limit of some more fundamental theory. This raises the question of the relevance of such results to the ontology of the actual world. One attitude that has been taken is that such results concern a hypothetical world in which quantum mechanics is exactly correct \cite{LeiferPsiOnt}. Another attitude, which is the one that is adopted in this article, is that one can and should seek to draw conclusions about the actual world from theorems of this sort. This requires one to seek to formulate theorems on the basis of assumptions that might reasonably be expected to hold of successor theories to quantum mechanics.  In particular, the theorems should not rely on assumptions that are generically violated by quantum field theories.

The key assumption of the PBR theorem is the Preparation Independence Postulate (PIP).  This is meant to apply to independent pure-state preparations performed on two or more quantum systems. The postulate is the conjunction of two assumptions. On the terminology of ref. \cite{LeiferPsiOnt}, these are the \emph{Cartesian Product Assumption} (CPA), that the state space of the joint system may be represented as the Cartesian product of state spaces of the subsystems, and the \emph{No Correlation Assumption} (NCA), that the joint probability distributions corresponding to the various possible joint preparations are products of distributions corresponding to preparations on the individual subsystems.

The CPA is violated in a relativistic quantum field theory. In such theories, entanglement between spacelike separated regions is generic, and independent preparations at spacelike separation cannot be assumed to remove all entanglement, though they may produce an arbitrarily close approximation to a product state (see ref. \cite{EntangOpenSystems} for discussion).  Any theorem that can be taken  to be robust under the transition from nonrelativistic quantum mechanics to a relativistic quantum field theory, and beyond, must not rely on the Cartesian Product Assumption.

In what follows, we adopt the framework of the PBR theorem but drop the CPA.  As the NCA cannot even be formulated in the absence of the CPA, a substitute is required that does not presuppose the CPA.  The independence assumption adopted is what we will call the \emph{Preparation Uninformativeness Condition} (PUC):  when choices of preparation procedures to be performed on two or more systems are made independently, then, given a complete specification of the resulting ontic state,  information about which preparation was performed on one system is not informative about which preparation was performed on the others.  This, of course, is entailed by the PIP, but it is strictly weaker than it.  As is shown in the Appendix, even if the NCA is assumed, the PUC does not entail the NCA.

The conclusion we will arrive at is that any pair of pure quantum states $\ket{\psi}$, $\ket{\phi}$ with $|\bkt{\phi}{\psi}| \leq 1/\sqrt{2}$  must be ontologically distinct. The method used in ref. \cite{PBR} to extend a result of this sort to arbitrary distinct states makes essential use of the Preparation Independence Postulate.  There does not seem to be any obvious way to achieve the same result using only the PUC; we leave it as an open question whether this can be done.

\section{The ontological models framework}  We adopt the ontological models framework of ref. \cite{SpekkensHarrigan}.  In this framework, with any physical system is associated a measurable space $\langle \Lambda, \mc{S} \rangle$, where $\Lambda$ is the space of possible ontic states of the system, and $\mc{S}$ is a $\sigma$-algebra of subsets of $\Lambda$, which are the subsets to which probabilities will be assigned.   With any preparation procedure $\psi$ is associated a probability measure $P_\psi$ on  $\langle \Lambda, \mc{S} \rangle$.    It is assumed that, for any experiment $E$ that can be performed on the system, with a finite set $K_E$ of outcomes, there exists, for each $k \in K_E$, a measurable function $p^E_k:\Lambda \rightarrow [0, 1]$, such that for any $\lambda \in \Lambda$, $p^E_k(\lambda)$ is the probability, in state $\lambda$, that the experiment $E$ yields the result $k$.  These must satisfy, for all $\lambda \in \Lambda$,
\begin{equation}
\sum_{k \in K_E} p^E_k(\lambda) = 1.
\end{equation}
The probability of result $k$, given preparation $\psi$, is then
\begin{equation}
P_\psi^E(k) = \langle \, p^E_k \, \rangle_\psi,
\end{equation}
where $\langle \; \cdot \; \rangle_\psi$ denotes  expectation value with respect to the probability measure $P_\psi$.

We will say that two  quantum states $\ket{\psi}$, $\ket{\phi}$ are \emph{ontologically distinct} if and only if, for any procedures that prepare the two states, the corresponding probability distributions have disjoint supports.  We will also want to consider states that are close to being ontologically distinct, and for this we need a way of quantifying the difference between two probability distributions. One such way is afforded by the \emph{statistical distance}, also known as the \emph{total variation distance}:
\begin{equation}
\delta(P, Q) = \sup_{A \in \mc{S}} \left|P(A) - Q(A) \right|.
\end{equation}
Its value ranges between 0, when $P = Q$, and $1$, when $P$ and $Q$ have disjoint supports. If $P$ and $Q$ have densities  $p$, $q$, with respect to some measure $\mu$, the statistical distance can be written in terms of these densities.
\begin{equation}
\delta(P, Q) = \frac{1}{2} \int_\Lambda \left|p - q \right| \, d\mu  = 1 - \int_\Lambda \min(p, q) \, d\mu.
\end{equation}
Define the classical overlap of the two distributions as
\begin{equation}
\omega(P, Q) = 1 - \delta(P, Q) = \int_\Lambda \min(p, q) \, d\mu.
\end{equation}

We will also be concerned with the \emph{Hellinger distance}, which is proportional to the $\mc{L}^2$ distance between $\sqrt{p}$ and $\sqrt{q}$.
\begin{multline}
H(P, Q) = \frac{1}{\sqrt{2}} \nrm{\sqrt{p} - \sqrt{q}}_2 \\ = \left(\frac{1}{2} \int_\Lambda \left(\sqrt{p} - \sqrt{q} \right)^2  d \mu \right)^{1/2}.
\end{multline}
This satisfies
\begin{equation}
H(P, Q)^2 = 1 - \int_\Lambda \sqrt{p \, q} \, d\mu.
\end{equation}
The \emph{fidelity} $F(P,Q)$, also known as the \emph{Battacharyya coefficient},  is defined as
\begin{equation}
F(P,Q) = 1 - H(P, Q)^2  =  \int_\Lambda \sqrt{p \, q} \, d\mu.
\end{equation}
The classical overlap $\omega$ and the fidelity $F$ satisfy the inequalities
\begin{equation}\label{lapineq}
\omega(P, Q) \leq F(P, Q) \leq \left(\int_\Lambda p \, q \, d\mu \right)^{1/2}.
\end{equation}
This gives,
\begin{equation}
\delta(P, Q) \geq H(P, Q)^2 \geq 1 - \left(\int_\Lambda p q \, d\mu \right)^{1/2}.
\end{equation}

For our $\psi$-ontology result, we will not presume that the state space of a composite system is the Cartesian product of the state spaces of subsystems.  It does not follow from this that one cannot attribute ontic states to the components of a composite system.  Given a system $\Sigma$, with state space $\Lambda$,  composed of subsystems $\{ \sigma_i \}$, we assume subsystem state spaces $\{ \Lambda_i \}$.  Presumably, the state of $\Sigma$ uniquely determines the states of the subsystems. This means that, for each $i$,  there is a function $f_i : \Lambda \rightarrow \Lambda_i$ such that $f_i(\lambda)$ is the ontic state of $\sigma_i$ when the ontic state of $\Sigma$ is $\lambda$. One possibility is, of course, that these are projections from a Cartesian product to its components.  Another possibility would be for the ontic state space of $\Sigma$ to consist of quantum states and for the functions $f_i$ to yield reduced states obtained by tracing out all components except $\sigma_i$. In what follows, nothing is presupposed about the relation between the  state space $\Lambda$ and the subsystem spaces $\Lambda_i$, which will play no role in the arguments.

\section{PBR's example} We will utilize the example employed by PBR.  Consider two systems, located in regions $A$ and $B$ (which we will use to denote both the regions and the systems located therein).  Let $\{\ket{0}_A, \ket{1}_A\}$ be a pair of orthogonal states of $A$, and similarly for $\{\ket{0}_B, \ket{1}_B\}$.  Define
\begin{equation}
\begin{array}{l}
\ket{+}_{A,B} = \frac{1}{\sqrt{2}}\left(\ket{0}_{A,B} + \ket{1}_{A,B} \right)
\\ \\
\ket{-}_{A,B} = \frac{1}{\sqrt{2}}\left(\ket{0}_{A,B} - \ket{1}_{A,B} \right)
\end{array}
\end{equation}
Each of the systems is prepared, independently, in its $\ket{0}$ or $\ket{+}$ state.  This yields four possible preparations of the joint system: $\{ \ket{0}_A \ket{0}_B, \ket{0}_A \ket{+}_B, \ket{0}_A \ket{+}_B, \ket{+}_A \ket{+}_B\}$.  It is assumed that there are probability distributions $\{P_{00}, P_{0+}, P_{+0}, P_{++} \}$ on the ontic state space of the joint system, corresponding to these four preparations.

Consider the following orthonormal basis for the Hilbert space of the joint system:
\begin{equation}
\begin{array}{l}
\ket{\xi_1} = \frac{1}{\sqrt{2}} \left(\ket{0}_A\ket{1}_B + \ket{1}_A\ket{0}_B  \right)
\\ \\
\ket{\xi_2} = \frac{1}{\sqrt{2}} \left(\ket{0}_A \ket{-}_B + \ket{1}_A \ket{+}_B  \right)
\\ \\
\ket{\xi_3} =  \frac{1}{\sqrt{2}} \left(\ket{+}_A \ket{1}_B + \ket{-}_A \ket{0}_B  \right)
\\ \\
\ket{\xi_4} =  \frac{1}{\sqrt{2}} \left(\ket{+}_A \ket{-}_B + \ket{-}_A \ket{+}_B  \right).
\end{array}
\end{equation}
Each of these states is orthogonal to one of the preparation states. If an experiment is performed with these states as the possible results, the condition that quantum probabilities be recovered entails that, on distribution $P_{00}$, outcome $1$ has probability zero, etc..  Thus, no matter what  result obtains, it rules out one of the four preparations.

Now suppose that we impose the Preparation Independence Postulate: the ontic state space $\Lambda$ of the joint system $AB$ is the Cartesian product of the ontic state spaces $\Lambda_A$, $\Lambda_B$  of the  component systems, and  the preparation distributions factorize; that is, there exist probability measures $P_x^A$, $P_y^B$, for
 $x, y \in \{0, + \}$, such that for all measurable $\Delta_A \subseteq \Lambda_A$ and $\Delta_B \subseteq \Lambda_B$,
\begin{equation}
P_{xy}(\Delta_A \times \Delta_B) = P_x^A(\Delta_A) P_y^A(\Delta_B).
\end{equation}
This, together with the condition that each outcome has probability zero in some preparation, yields the conclusion that  the overlaps $\omega(P_0^A, P_+^A)$ and $\omega(P_0^B, P_+^B)$ satisfy
\begin{equation}
\omega(P_0^A, P_+^A) \: \omega(P_0^B, P_+^B) = 0.
\end{equation}
If, now, $A$ and $B$ are systems of the same kind subject to the same choices of preparation procedures, $\omega(P_0^A, P_+^A) = \omega(P_0^B, P_+^B)$, and so
\begin{equation}
\omega(P_0^A, P_+^A) = \omega(P_0^B, P_+^B) = 0,
\end{equation}
and hence
\begin{equation}
\delta(P_0^A, P_+^A) = \delta(P_0^B, P_+^B) = 1.
\end{equation}
The argument generalizes beyond the idealized case of perfect preclusion.  Suppose that, for each outcome $k \in \{1, 2, 3, 4\}$, there exists a preparation $\langle x, y \rangle$ such that $P_{xy}(k) \leq \varepsilon$.
On the assumption of the Preparation Independence Postulate, it follows that
\begin{equation}
\omega(P_0^A, P_+^A) \, \omega(P_0^B, P_+^B) \leq 4 \varepsilon,
\end{equation}
and hence, if $\omega(P_0^A, P_+^A) = \omega(P_0^B, P_+^B)$,
\begin{equation}
\delta(P_0^A, P_+^A) \geq 1 - 2 \sqrt{\varepsilon}.
\end{equation}
See  ref. \cite{PBR} Supplementary Information  for details of the proof.

\section{The Preparation Uninformativeness Condition}  We wish to find a substitute for the Preparation Independence Postulate that does not presuppose the Cartesian Product Assumption, or even that the Cartesian product  of the subsystem state spaces can be embedded in the state space of the joint system. Thus, the weakenings of the CPA proposed in refs \cite{ESSV,Mansfield2016}, on which the state space considered is the Cartesian product of the component state spaces and another space that encodes nonlocal information, will not suit our purposes.  The condition that we will impose is the following.  Suppose that, for systems $A$, $B$, we have some set of possible preparations of the individual systems.  Suppose that the choice of preparation for each of the subsystems is made independently.  Following the preparation of the joint system, which consists of individual preparations on the subsystems, you are not told which preparation  has been  performed, but you are  given a complete specification of the ontic state of the joint system.  On the basis of this information, you have degrees of belief about which preparations were performed. In the case of ontologically distinct preparations, you will be certain about what preparations were performed; if the preparations are not ontologically distinct, you may have less than total information about which preparations were performed.

We ask: under these conditions, is information about which preparation was performed on one system  informative about which preparation was performed on the other? The Preparation Uninformativeness Condition is the condition that it is not. Note that it is automatically satisfied if the preparations are ontologically distinct. In such a case, given the ontic state of the joint system, you know precisely which preparations have been performed, and being told about the preparation on one system does not add to your stock of knowledge.

One way for the PUC to be violated is to have the ontic state space of the joint system  be the Cartesian product of the subsystem ontic spaces, and  the joint probability distributions  be ones in which the states of the subsystems are correlated.  It is also violated, as we shall see, by models, such as those constructed in ref. \cite{ABCL}, on which nonorthogonal quantum states are never ontologically distinct.

The PUC is implied by the PIP, but it is strictly weaker.  As shown in the Appendix, even if the CPA is assumed, it is possible to construct models for the PBR setup, in which the PUC is satisfied but $P_{00}$ and $P_{0+}$ have nonnull overlap, which by the PBR theorem, is ruled out for models that satisfy the NCA.

We now state the PUC more formally.  Let $x \in X$, $y \in Y$ be variables ranging over finite sets of possible preparations of $A$, $B$, respectively.  We assume that $x$ and $y$ are chosen independently, with  probabilities $p_x$, $q_y$ in $(0, 1)$.  Let $P_{xy}$ be the probability distribution over the ontic state space $\Lambda$ of the joint system,  corresponding to preparation $x$ on $A$ and $y$ on $B$.  The procedure of independently choosing $x$ and $y$ and preparing $P_{xy}$ yields a mixed distribution,
\begin{equation}
\bar{P} = \sum_{x, y} p_x \, q_y  \, P_{xy}.
\end{equation}
Suppose that, knowing that the system $AB$ has been prepared in this way, but not knowing which preparation was chosen,  you are given the information that the ontic state $\lambda$ of $AB$ is in some measurable subset $\Delta \subseteq \Lambda$.   Conditional on that information, your degree of belief  in the proposition $A_a$, that $A$ was subjected to preparation $a$, should be
\begin{equation}
\bar{P}(A_a | \Delta) = \frac{p_a \sum_y q_y P_{a y}(\Delta)}{\bar{P}(\Delta)}.
\end{equation}
Your degree of belief that $B$ was subjected to preparation $b$ should be
\begin{equation}
\bar{P}(B_b | \Delta) = \frac{q_b \sum_x p_x P_{x b}(\Delta)}{\bar{P}(\Delta)},
\end{equation}
and your degree of belief that the joint preparation is $\langle a, b \rangle$ should be
\begin{equation}
\bar{P}(A_a \, B_b | \Delta) = \frac{p_a \, q_b \, P_{ab}(\Delta)}{\bar{P}(\Delta)}.
\end{equation}
The condition that learning $B_b$ have no effect on your degree of belief in $A_a$, and \emph{vice versa}, is the condition that
\begin{equation}\label{ind1}
\bar{P}(A_a \, B_b | \Delta) = \bar{P}(A_a | \Delta)\bar{P}(B_b | \Delta).
\end{equation}
We  impose the condition that this hold, not for arbitrary sets $\Delta$, but for  a maximal specification of the ontic state. If the ontic state space has atoms that are assigned nonzero probability by some or all of the preparation distribution, we will take (\ref{ind1}) to hold for those; in the general case, we take it to hold in the limit of increasingly specific state descriptions.  Let the distributions $P_{x y}$ have density  functions $\mu_{xy}$ with respect to some measure $Q$ that dominates them all.  For any $\lambda \in \Lambda_{AB}$ (except, perhaps, for  a set that has  measure zero on all of the preparation distributions), and for any $\varepsilon > 0$, it is  possible to find a set $\delta_\varepsilon(\lambda)$ containing $\lambda$, with $Q(\delta_\varepsilon(\lambda)) > 0$,   on which  all of the density functions $\mu_{xy}$ vary by less than $\varepsilon$. If $\{\delta_\varepsilon(\lambda) \}$ is a family of such sets, we  have
\begin{equation}\label{limit}
\lim_{\varepsilon \rightarrow 0} \frac{P_{x y}(\delta_\varepsilon(\lambda))}{Q(\delta_\varepsilon(\lambda))} = \mu_{xy}(\lambda).
\end{equation}
We impose the condition that
\begin{equation}\label{ind1}
\lim_{\varepsilon \rightarrow 0} \left[\bar{P}(A_a \, B_b | \delta_\varepsilon(\lambda)) - \bar{P}(A_a | \delta_\varepsilon(\lambda))\,\bar{P}(B_b | \delta_\varepsilon(\lambda)) \right] = 0.
\end{equation}
This is equivalent to
\begin{equation}
\frac{\mu_{ab}(\lambda)}{\sum_{x,y} p_x \, q_y \, \mu_{x y}(\lambda)} = \frac{\sum_{x, y} p_x \, q_y \, \mu_{a y}(\lambda) \, \mu_{x b} (\lambda)}{(\sum_{x,y} p_x \, q_y \, \mu_{x y}(\lambda))^2},
\end{equation}
or,
\begin{equation}
\sum_{x, y} p_x \, q_y \left( \mu_{a b} (\lambda) \mu_{x y}(\lambda) -  \mu_{a y} (\lambda) \mu_{x b}(\lambda) \right) = 0.
\end{equation}
If this is to hold for arbitrary probabilities $\{p_x\}$, $\{q_y\}$, we must have
\begin{equation}\label{PUC}
\mu_{a b}(\lambda) \,\mu_{x y}(\lambda) = \mu_{x b} (\lambda) \, \mu_{a y}(\lambda).
\end{equation}
for all $x \in X$, $y \in Y$.   If there are  density functions  such that   (\ref{PUC}) holds for almost all $\lambda$, we can always construct density functions for which it holds for \emph{all} $\lambda$.  Though the values of the density functions will depend on the choice of background measure $Q$ used to define the densities, the choice will not affect whether (\ref{PUC}) is satisfied.

We take  as our formal definition of the Preparation Uninformativeness Condition: A set of probability distributions $\{P_{x y} \, | \,  x \in X, y \in Y \}$ on a common measurable space $\langle \Lambda, \mc{S} \rangle$ will be said to satisfy the PUC if and only if they have densities $\{ \mu_{x y} \}$ that satisfy (\ref{PUC}) for all  $\lambda \in \Lambda_{AB}$ and all $a, x \in X$, $b, y \in Y$.

For the special case in which the choice of preparations consists of only two alternatives $\{0, +\}$ for each subsystem, this is the condition that, for all $\lambda \in \Lambda_{AB}$,
\begin{equation}\label{PUC2}
\mu_{00}(\lambda) \, \mu_{++}(\lambda) = \mu_{0+}(\lambda) \, \mu_{+0}(\lambda).
\end{equation}

\section{A $\psi$-ontology result}  With the PUC in place, it is possible to show that, for the PBR example, respecting the quantum preclusions entails that $\ket{0}_A \ket{0}_B$ and $\ket{+}_A \ket{+}_B$ are ontologically distinct, as are $\ket{0}_A \ket{+}_B$ and $\ket{+}_A \ket{0}_B$.

\begin{theorem} Let $P_{x y}$, $x, y \in \{0, + \}$ be four probability distributions on a measurable space $\langle \Lambda, \mc{S} \rangle$, that satisfy the Preparation Uninformativeness Condition. If there is an experiment $E$, such that each outcome of $E$ is precluded by some $P_{x y}$, then $P_{00}$ and $P_{++}$ have null overlap, as do  $P_{0+}$ and $P_{+0}$.
\end{theorem}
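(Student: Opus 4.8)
The plan is to argue entirely in terms of densities and to show that the set on which $\mu_{00}$ and $\mu_{++}$ are simultaneously positive is null. First I would invoke the formal definition of the PUC: the four distributions admit densities $\mu_{xy} = dP_{xy}/dQ$ with respect to a common dominating measure $Q$, satisfying the pointwise identity (\ref{PUC2}), $\mu_{00}(\lambda)\,\mu_{++}(\lambda) = \mu_{0+}(\lambda)\,\mu_{+0}(\lambda)$ for every $\lambda \in \Lambda$; by the remarks preceding the theorem, neither the desired conclusion $\omega(P_{00},P_{++}) = 0$ nor the validity of this identity depends on the choice of $Q$. Set $S = \{\lambda \in \Lambda : \mu_{00}(\lambda) > 0 \text{ and } \mu_{++}(\lambda) > 0\}$. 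The key observation --- this is the step at which the PUC takes over the role played in \cite{PBR} by the Cartesian Product Assumption --- is that on $S$ \emph{all four} densities are strictly positive: for $\lambda \in S$ the product $\mu_{0+}(\lambda)\,\mu_{+0}(\lambda)$ equals $\mu_{00}(\lambda)\,\mu_{++}(\lambda) > 0$, and since densities are nonnegative, both $\mu_{0+}(\lambda)$ and $\mu_{+0}(\lambda)$ must be positive. Where PBR build their ``overlap region'' as a Cartesian product $\Omega_A \times \Omega_B$ of single-system overlap regions, the PUC furnishes a surrogate $S$ with no reference to product structure.

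Next I would unpack ``precluded''. If $P_{xy}$ precludes outcome $k$ of $E$, then $0 = P^E_{xy}(k) = \int_\Lambda p^E_k \, dP_{xy} = \int_\Lambda p^E_k \, \mu_{xy}\, dQ$, and since $p^E_k \geq 0$ this forces $p^E_k = 0$ $Q$-almost everywhere on $\{\mu_{xy} > 0\}$, hence $Q$-a.e.\ on $S$. Running over all outcomes $k \in K_E$ --- each precluded by \emph{some} $P_{x_k y_k}$, which we have just seen has $\mu_{x_k y_k} > 0$ throughout $S$ --- we obtain $p^E_k = 0$ $Q$-a.e.\ on $S$ for every $k$. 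Summing over $k$ and using $\sum_{k \in K_E} p^E_k(\lambda) = 1$ for all $\lambda$ yields $1 = 0$ $Q$-a.e.\ on $S$; therefore $Q(S) = 0$, and so $P_{xy}(S) = 0$ for all $x,y$.

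Finally, $Q(S) = 0$ means precisely that $\min(\mu_{00},\mu_{++}) = 0$ $Q$-a.e., whence $\omega(P_{00},P_{++}) = \int_\Lambda \min(\mu_{00},\mu_{++})\,dQ = 0$, i.e.\ $P_{00}$ and $P_{++}$ have null overlap. For the second pair I would not repeat the argument but simply note that, for nonnegative reals, $\min(a,b) = 0$ iff $ab = 0$; combined with the PUC identity $\mu_{0+}\mu_{+0} = \mu_{00}\mu_{++}$, this shows $\min(\mu_{0+},\mu_{+0}) = 0$ on exactly the same ($Q$-conull) set, so $\omega(P_{0+},P_{+0}) = 0$ as well. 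I do not anticipate a serious obstacle: the only places calling for care are the measure-theoretic ones --- securing a single dominating measure together with the everywhere-valid form of (\ref{PUC2}), and passing from the integral equation $\int_\Lambda p^E_k\,\mu_{xy}\,dQ = 0$ to the pointwise-a.e.\ vanishing of $p^E_k$ on the support of $\mu_{xy}$. All the real content sits in the one-line observation that the PUC forces the four densities to be positive together on $S$.
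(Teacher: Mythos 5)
Your proof is correct and follows essentially the same route as the paper: the PUC identity $\mu_{00}\mu_{++}=\mu_{0+}\mu_{+0}$ reduces both pairwise overlaps to the four-way joint overlap, which the preclusion condition forces to vanish. The only difference is cosmetic --- the paper asserts the null joint overlap directly and writes $\mu_{00}\mu_{++}\mu_{0+}\mu_{+0}=(\mu_{00}\mu_{++})^2=0$, whereas you argue contrapositively on the set $S$ and explicitly derive the contradiction with $\sum_k p^E_k=1$, a step the paper leaves to the reader.
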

\begin{proof}
Since every preparation is ruled out  by some outcome, the four distributions have null joint overlap.  This means that, for almost all $\lambda$,
\begin{equation}
\mu_{++}(\lambda) \, \mu_{00}(\lambda) \, \mu_{+0}(\lambda) \, \mu_{0+}(\lambda) = 0.
\end{equation}
We assume the PUC,
\begin{equation}
\mu_{00}(\lambda) \, \mu_{++}(\lambda) = \mu_{0+}(\lambda) \, \mu_{+0}(\lambda).
\end{equation}
Then
\begin{equation}
\mu_{++}(\lambda) \, \mu_{00}(\lambda) \, \mu_{+0}(\lambda) \, \mu_{0+}(\lambda) = (\mu_{00}(\lambda) \, \mu_{++}(\lambda))^2,
\end{equation}
and so
\begin{equation}
\mu_{++}(\lambda) \, \mu_{00}(\lambda) = \mu_{+0}(\lambda) \, \mu_{0+}(\lambda) = 0
\end{equation}
for almost all $\lambda$.
\end{proof}

It is easy to see intuitively why the PUC entails that  $P_{++}$ and $P_{00}$ have null overlap.  Since every preparation is ruled out  by some outcome, there is no subset of $\Lambda$ that is assigned positive probability by all four preparation measures.  It follows from this that, if $\mu_{++}$ and $\mu_{00}$ have non-null overlap, then the PUC is violated. To see this, suppose that, for some $\lambda$,  $\mu_{++}(\lambda)$ and $\mu_{00}(\lambda)$ are both nonzero.  Then, for any such $\lambda$ (except perhaps for a set of measure zero), either $\mu_{+0}(\lambda)$ or $\mu_{0+}(\lambda)$ (or both) is zero. Suppose that $\mu_{+0}(\lambda) = 0$.  Then, conditional on the information that the state is $\lambda$, you have nonzero degree of belief in preparations $\ket{0}_A \ket{0}_B$ and $\ket{+}_A \ket{+}_B$, and zero degree of belief
 in $\ket{+}_A \ket{0}_B$.  If, now, you are informed that the $A$-preparation was $\ket{+}_A$, this permits you to  conclude that the $B$-preparation was $\ket{+}_B$, in violation of the PUC.

The assumption that strict preclusion is possible is not an  assumption that survives the transition to quantum field theory. In a quantum field theory, no experiment that can be performed in a bounded spacetime region has probability strictly zero for any outcome.  We need, therefore, a generalization of the result that does not presume strict preclusion.

We will say that an outcome $k$ is $\varepsilon$-precluded by a probability distribution $P$ iff the probability assigned to $k$ by $P$ is less than or equal to $\varepsilon$.
In place of strict preclusion, we will assume that, for every $\varepsilon > 0$, there is an experiment that achieves $\varepsilon$-preclusion.  This is sufficient to yield the conclusion that $P_{++}$ and $P_{00}$ are ontologically distinct.

\begin{theorem}\label{bounds} Let $P_{x y}$, $x, y \in \{0, + \}$ be four probability distributions on a measurable space $\langle \Lambda, \mc{S} \rangle$ that satisfy the Preparation Uninformativeness Condition.  If there is an $m$-outcome experiment $E$, such that each outcome of $E$ is $\varepsilon$-precluded by  some $P_{x y}$, then
\begin{equation*}
\delta(P_{00}, P_{++}) \geq H(P_{00}, P_{++})^2 \geq 1 - 2 \sqrt{m \varepsilon},
\end{equation*}
and
\begin{equation*}
\delta(P_{0+}, P_{+0}) \geq H(P_{0+}, P_{+0})^2 \geq 1 - 2 \sqrt{m \varepsilon}.
\end{equation*}
\end{theorem}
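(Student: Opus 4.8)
The plan is to reduce the claim to a single bound on a fidelity and then estimate that fidelity outcome-by-outcome using the experiment $E$. The inequality chain $\delta(P,Q)\ge H(P,Q)^2 = 1-F(P,Q)$ established in Section II means it suffices to show $F(P_{00},P_{++})\le 2\sqrt{m\varepsilon}$ and $F(P_{0+},P_{+0})\le 2\sqrt{m\varepsilon}$. In fact a single estimate does both jobs: taking pointwise square roots of the PUC relation \eqref{PUC2} gives $\sqrt{\mu_{00}(\lambda)\mu_{++}(\lambda)}=\sqrt{\mu_{0+}(\lambda)\mu_{+0}(\lambda)}$ for every $\lambda$, so $F(P_{00},P_{++})=\int_\Lambda\sqrt{\mu_{00}\mu_{++}}\,dQ=\int_\Lambda\sqrt{\mu_{0+}\mu_{+0}}\,dQ=F(P_{0+},P_{+0})$, the integral being independent of which common dominating measure $Q$ is used. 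Call this common value $\Phi$; the task is to show $\Phi\le 2\sqrt{m\varepsilon}$.

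Next I would bring in the experiment. Since $\sum_k p^E_k(\lambda)=1$ for all $\lambda$, we may write $\Phi=\sum_{k=1}^m\int_\Lambda p^E_k\sqrt{\mu_{00}\mu_{++}}\,dQ$. For each outcome $k$ fix one preparation $\langle x(k),y(k)\rangle$ that $\varepsilon$-precludes it, and let $S_{xy}$ collect the outcomes assigned to $\langle x,y\rangle$; the four sets $S_{xy}$ partition $\{1,\dots,m\}$, so $\sum_{x,y}|S_{xy}|=m$. On $S_{00}$ and on $S_{++}$ I keep the integrand as $\sqrt{\mu_{00}\mu_{++}}$; on $S_{0+}$ and on $S_{+0}$ I use the PUC identity to rewrite it as $\sqrt{\mu_{0+}\mu_{+0}}$. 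The point of this bookkeeping is that in every class the integrand is now a product $\sqrt{\mu_{xy}}\,\sqrt{\mu_{x'y'}}$ whose first factor belongs to the distribution that $\varepsilon$-precludes $k$.

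Then I apply Cauchy--Schwarz twice. For a single $k\in S_{xy}$, $\int p^E_k\sqrt{\mu_{xy}\mu_{x'y'}}\,dQ=\int\sqrt{p^E_k\mu_{xy}}\,\sqrt{p^E_k\mu_{x'y'}}\,dQ\le\big(P_{xy}(k)\big)^{1/2}\big(P_{x'y'}(k)\big)^{1/2}$. Summing over $k\in S_{xy}$ and applying Cauchy--Schwarz to the sum gives $\sum_{k\in S_{xy}}\big(P_{xy}(k)P_{x'y'}(k)\big)^{1/2}\le\big(\sum_{k\in S_{xy}}P_{xy}(k)\big)^{1/2}\big(\sum_{k\in S_{xy}}P_{x'y'}(k)\big)^{1/2}\le(|S_{xy}|\,\varepsilon)^{1/2}\cdot 1$, using $P_{xy}(k)\le\varepsilon$ on $S_{xy}$ and $\sum_k P_{x'y'}(k)=1$. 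Adding the four class contributions yields $\Phi\le\sqrt{\varepsilon}\sum_{x,y}|S_{xy}|^{1/2}$, and a last Cauchy--Schwarz (concavity of the square root) on the four-term sum gives $\sum_{x,y}|S_{xy}|^{1/2}\le 2\big(\sum_{x,y}|S_{xy}|\big)^{1/2}=2\sqrt{m}$. Hence $\Phi\le 2\sqrt{m\varepsilon}$, and the two displayed inequalities follow via $\delta\ge H^2=1-F$.

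I do not expect a genuinely hard step; this is a quantitative, PUC-adapted version of the overlap estimate sketched for the PIP case. The one place needing care — and the place where the PUC earns its keep — is the $S_{0+}$/$S_{+0}$ step: without the identity $\sqrt{\mu_{00}\mu_{++}}=\sqrt{\mu_{0+}\mu_{+0}}$ one cannot turn the $\varepsilon$-preclusion of $P_{0+}$ or $P_{+0}$ into a bound on the integrand appearing in $F(P_{00},P_{++})$, and the estimate would degrade to roughly $\Phi\le m\sqrt{\varepsilon}$. Minor points worth recording along the way: the fidelity integral does not depend on the choice of common dominating measure; the PUC, assumed for all $\lambda$, legitimizes taking pointwise square roots; and the only features of $E$ used are normalization of the $p^E_k$ and the $\varepsilon$-preclusion hypothesis.
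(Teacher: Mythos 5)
Your proof is correct and reaches the theorem with the same constant $2\sqrt{m\varepsilon}$, but the estimation machinery is genuinely different from the paper's. The shared skeleton is: decompose an overlap integral over the outcomes of $E$ via $\sum_k p^E_k=1$, use the PUC identity $\mu_{00}\mu_{++}=\mu_{0+}\mu_{+0}$ to convert $\varepsilon$-preclusion by $P_{0+}$ or $P_{+0}$ into a bound on the $(00,++)$ integrand, and sum over $k$. The paper, however, does not bound the fidelity directly: it fixes the equally weighted mixture $\bar P=\frac14\sum_{x,y}P_{xy}$ as dominating measure so that every density satisfies $\mu_{xy}\le 4$ almost everywhere, bounds each term by $\int\mu_{00}\mu_{++}\,p^E_k\,d\bar P\le 4\int\mu_{xy}\,p^E_k\,d\bar P=4P^E_{xy}(k)\le 4\varepsilon$ for a precluding $P_{xy}$, sums to obtain $\int\mu_{00}\mu_{++}\,d\bar P\le 4m\varepsilon$, and only then passes to the fidelity through the chain $F\le\left(\int\mu_{00}\mu_{++}\,d\bar P\right)^{1/2}$ of inequality (\ref{lapineq}). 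You instead work with $F=\int\sqrt{\mu_{00}\mu_{++}}\,dQ$ itself, replacing the uniform density bound by Cauchy--Schwarz applied pointwise in $\lambda$, then over the outcomes within each class $S_{xy}$, then over the four classes. Your route buys independence from the choice of dominating measure (no $\mu_{xy}\le 4$ trick), the slightly stronger intermediate fact that $F(P_{00},P_{++})=F(P_{0+},P_{+0})$ exactly under the PUC, and a bound $\sqrt{\varepsilon}\sum_{x,y}|S_{xy}|^{1/2}$ that can be marginally sharper than $2\sqrt{m\varepsilon}$ when the outcomes are unevenly distributed among the precluding preparations; the paper's route buys a shorter argument in which each outcome is handled by one elementary inequality and a single square root is taken at the end. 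Your appeal to pointwise square roots of the PUC relation is licensed by the paper's formal definition, which requires the densities to satisfy it for all $\lambda$, so there is no gap there.
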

\begin{proof}
Let $\bar{P}$ be the equally weighted mixture of the four probability distributions.
\begin{equation}
\bar{P} = \frac{1}{4} \sum_{x,y} P_{x y}.
\end{equation}
Since each $P_{xy}$ is absolutely continuous with respect to $\bar{P}$, by the Radon-Nikodym theorem  it has a density $\mu_{x y}$ with respect to $\bar{P}$.  Since, for each $x, y$, $\bar{P}(\Delta) \geq P_{xy}(\Delta)/4$ for all $\Delta \in \mc{S}$,  we must have
\begin{equation}
\mu_{x y} (\lambda) \leq 4
\end{equation}
for almost all $\lambda \in \Lambda$.

Let $K$ be the outcome-set of the experiment $E$, and let $\{p_k \, | \, k \in K \}$ be the functions that yield probabilities for these outcomes.  These must satisfy, for all $\lambda$,
\begin{equation}
\sum_{k \in K} p_k(\lambda) = 1.
\end{equation}

Let $K_0$ be the subset of $K$ consisting of outcomes that are $\varepsilon$-precluded by either $P_{00}$ or $P_{++}$, and let $K_1$ be the set of  outcomes that are $\varepsilon$-precluded by either $P_{0+}$ or $P_{+0}$.  Since every outcome is $\varepsilon$-precluded by some preparation, $K_0 \cup K_1 = K$.

If outcome $k$ is $\varepsilon$-precluded by $P_{00}$, then, because $\mu_{++}(\lambda) \leq 4$ for almost all $\lambda$,
\begin{multline}
\int_\Lambda \mu_{00}(\lambda) \mu_{++}(\lambda)\, p_k(\lambda) \, d\bar{P}(\lambda) \\
\leq  4 \int_\Lambda \mu_{00}(\lambda)\, p_k(\lambda) \, d\bar{P}(\lambda) \leq 4 \varepsilon.
\end{multline}
Similarly, if $k$ is $\varepsilon$-precluded by $P_{++}$,
\begin{equation}
\int_\Lambda \mu_{00}(\lambda) \mu_{++}(\lambda) \, p_k(\lambda) \, d\bar{P}(\lambda) \leq 4 \varepsilon.
\end{equation}
Therefore, for all $k \in K_0$,
\begin{equation}
\int_\Lambda \mu_{00}(\lambda) \mu_{++}(\lambda) \, p_k(\lambda) \, d\bar{P}(\lambda) \leq 4 \varepsilon.
\end{equation}
Similarly, for all $k \in K_1$,
\begin{equation}
\int_\Lambda \mu_{0+}(\lambda) \mu_{+ 0}(\lambda) \, p_k(\lambda) \, d\bar{P}(\lambda) \leq 4 \varepsilon.
\end{equation}
If, now, we assume the PUC,
\begin{equation}
\mu_{00}(\lambda) \mu_{++}(\lambda) = \mu_{0+}(\lambda) \mu_{+ 0}(\lambda),
\end{equation}
it follows that, for all $k \in K$,
\begin{multline}
\int_\Lambda \mu_{00}(\lambda) \mu_{++}(\lambda) \, p_k(\lambda)\, d\bar{P}(\lambda) \\
= \int_\Lambda \mu_{0+}(\lambda) \mu_{+ 0}(\lambda) \, p_k(\lambda)\, d\bar{P}(\lambda) \leq 4 \varepsilon.
\end{multline}
Summing over all $k \in K$ yields
\begin{multline}
\int_\Lambda \mu_{00}(\lambda) \mu_{++}(\lambda) \, d\bar{P}(\lambda) \\
= \int_\Lambda \mu_{0+}(\lambda) \mu_{+ 0}(\lambda) \, d\bar{P}(\lambda) \leq 4 m \varepsilon.
\end{multline}
Because of the inequalities (\ref{lapineq}),
\begin{equation}
\begin{array}{l}
\omega(P_{00}, P_{++}) \leq F(P_{00}, P_{++}) \leq  2 \sqrt{m \varepsilon}
\\ \\
\omega(P_{0+}, P_{+0}) \leq F(P_{0+}, P_{+0}) \leq  2 \sqrt{m \varepsilon},
\end{array}
\end{equation}
from which it follows that
\begin{equation}
\begin{array}{l}
\delta(P_{00}, P_{++}) \geq H(P_{00}, P_{++})^2 \geq   1 - 2 \sqrt{m \varepsilon},
\\ \\
\delta(P_{0+}, P_{+0}) \geq H(P_{0+}, P_{+0})^2 \geq 1  -  2 \sqrt{m \varepsilon}.
\end{array}
\end{equation}
\end{proof}

\begin{corollary} Let $P_{x y}$, $x, y \in \{0, + \}$ be four probability distributions on a measurable space $\langle \Lambda, \mc{S} \rangle$ that satisfy the Preparation Uninformativeness Condition.  If, for every $\varepsilon > 0$,  there is an experiment $E_\varepsilon$, such that each outcome of $E_\varepsilon$ is $\varepsilon$-precluded by  some $P_{x y}$, then
\[
\delta(P_{00}, P_{++}) = \delta(P_{0+}, P_{+0}) = 1.
\]
\end{corollary}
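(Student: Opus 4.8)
The plan is to derive the Corollary from Theorem~\ref{bounds} by passing to the limit $\varepsilon \to 0$. Given $\varepsilon > 0$, the hypothesis supplies an experiment $E_\varepsilon$, each of whose outcomes is $\varepsilon$-precluded by some $P_{xy}$; let $m_\varepsilon$ denote its number of outcomes. Theorem~\ref{bounds} then applies and yields, simultaneously,
\begin{equation*}
\delta(P_{00}, P_{++}) \geq 1 - 2\sqrt{m_\varepsilon\, \varepsilon}, \qquad \delta(P_{0+}, P_{+0}) \geq 1 - 2\sqrt{m_\varepsilon\, \varepsilon}.
\end{equation*}
Since the left-hand sides do not depend on $\varepsilon$, each is $\geq \sup_{\varepsilon > 0}\bigl(1 - 2\sqrt{m_\varepsilon\,\varepsilon}\bigr)$, so everything reduces to showing that $m_\varepsilon\,\varepsilon$ can be driven to $0$.

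In the case of interest this is immediate. For the PBR example the relevant measurement is the fixed four-outcome measurement in the basis $\{\ket{\xi_1},\dots,\ket{\xi_4}\}$, so one may take $m_\varepsilon = 4$ for all $\varepsilon$, whence $\delta(P_{00},P_{++}) \geq 1 - 4\sqrt{\varepsilon}$ and $\delta(P_{0+},P_{+0}) \geq 1 - 4\sqrt{\varepsilon}$ for every $\varepsilon > 0$; letting $\varepsilon \to 0$ and using $\delta \leq 1$ gives the claimed equalities. More generally, the argument goes through verbatim as long as the experiments $E_\varepsilon$ can be chosen so that $\liminf_{\varepsilon\to 0} m_\varepsilon\,\varepsilon = 0$ — in particular whenever their outcome-counts stay bounded. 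One could equivalently phrase the limiting step through the fidelity: the proof of Theorem~\ref{bounds} gives $F(P_{00},P_{++}) \leq 2\sqrt{m_\varepsilon\,\varepsilon}$ and $F(P_{0+},P_{+0})\leq 2\sqrt{m_\varepsilon\,\varepsilon}$, so under $m_\varepsilon\varepsilon\to 0$ both fidelities, hence by (\ref{lapineq}) both classical overlaps $\omega$, vanish, which is exactly null overlap.

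The one place that genuinely needs attention is the control of $m_\varepsilon$: the inequality from Theorem~\ref{bounds} is vacuous if the number of outcomes is permitted to grow as fast as $1/\varepsilon$, so for the abstract statement one should either restrict to families $\{E_\varepsilon\}$ with a bounded number of outcomes — which is the situation in every concrete realization, the PBR construction included — or incorporate such a bound into the hypothesis. With that understood, no computation beyond invoking Theorem~\ref{bounds} is required.
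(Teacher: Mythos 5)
Your proof is correct and is essentially the argument the paper intends: the Corollary is stated without proof immediately after Theorem~\ref{bounds}, and the evident route is exactly yours, namely to apply the theorem to the experiment $E_\varepsilon$ for each $\varepsilon$ and let $\varepsilon \to 0$. Your caveat about controlling $m_\varepsilon$ is not mere caution, however: as literally stated, with no bound on the number of outcomes of $E_\varepsilon$, the Corollary is false. Take all four $P_{xy}$ equal to a single distribution (the PUC holds trivially, with all densities identically $1$), and for each $\varepsilon$ let $E_\varepsilon$ have $m = \lceil 1/\varepsilon \rceil$ outcomes, each occurring with probability $1/m \leq \varepsilon$ independently of $\lambda$; then every outcome is $\varepsilon$-precluded by every $P_{xy}$, yet $\delta(P_{00},P_{++}) = 0$. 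So the hypothesis must be read as fixing the experiment (as with the four-outcome PBR measurement) or at least as guaranteeing $m_\varepsilon\,\varepsilon \to 0$ along some sequence, exactly as you propose; with that reading your argument is complete and coincides with the paper's.
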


The condition that $P_{00}$ be ontologically distinct from $P_{++}$ and  $P_{0+}$ be ontologically distinct from $P_{+0}$ entails that, though the ontic state $\lambda$ might not uniquely determine which preparation was performed, it always uniquely determines at least \emph{one} of the two preparations.

If an ontic state $\lambda$ is compatible with $\ket{0}_A \ket{0}_B$ but does not uniquely determine the $A$-preparation, then, since it can't be compatible with $\ket{+}_A \ket{+}_B$, it must be compatible with  $\ket{+}_A \ket{0}_B$  and, hence, incompatible with $\ket{0}_A \ket{+}_B$.  In such a case, $\lambda$ is compatible with only one $B$-preparation,  $\ket{0}_B$.  Similar conclusions hold for any other ontic states; any ontic state will either uniquely determine the $A$-preparation or the $B$-preparation.  Thus, if the PUC is satisfied in a situation in which two choices are made for $A$-preparation and for $B$-preparation,  for almost all ontic states $\lambda$, $\lambda$ determines either a fact of the matter about the quantum state of $A$ or a fact of the matter about the quantum state of $B$.

Now consider the case of a large number $N$ of systems, each of which is subject to a $\ket{0}$ or $\ket{+}$ preparation.  This gives us $2^N$ possible preparations of the joint system.
Suppose that, for each pair $\langle \alpha, \beta  \rangle$ of these systems, there is an experiment such that each outcome of the experiment rules out one of  $\{ \ket{0}_\alpha \ket{0}_\beta, \ket{0}_\alpha \ket{+}_\beta, \ket{+}_\alpha \ket{0}_\beta, \ket{+}_\alpha \ket{+}_\beta \}$. That is, for each $\langle x, y \rangle, x, y \in \{ 0, +\}$, there is an outcome that is assigned zero probability by all of the $2^{N-2}$ probability distributions corresponding to preparations in which system $\alpha$ is subjected to the $\ket{x}$ preparation and system $\beta$ to the $\ket{y}$  preparation.  Suppose that the PUC is satisfied.  By the same reasoning we have applied to the case of only two systems, it must be the case that, for any pair $\alpha, \beta$ of systems, for almost all ontic states $\lambda$,  the ontic state $\lambda$ uniquely determines the quantum state of at least one member of the pair.  Since this must be true for every pair, at most one of the systems has its quantum state undetermined  by the ontic state $\lambda$.

This means that, for large $N$, for almost all $\lambda$, most of the systems---at least  $N - 1$ of them---will have their quantum states uniquely determined by the ontic state $\lambda$.  If one of these systems is chosen at random, the probability of choosing a system in  a definite quantum state is at least $1 - 1/N$.

We now impose a \emph{Principle of Extendibility}: any system composed of $N$ subsystems of the same type can be regarded as a part of a larger system consisting of a greater number of systems of the same type.  If we assume this, then, the state of the individual systems must be such that, if one is chosen at random, the probability that its quantum state is uniquely determined by the ontic state is, for every $N$, greater than or equal to $1 - 1/N$.    That is, with probability one, the ontic state of any whole of which the system may be regarded as a part is such that the ontic state uniquely determines its quantum state.

Therefore, if we assume the PUC and the Principle of Extendibility, and assume also  the possibility of experiments that achieve $\varepsilon$-preclusion for arbitrarily small $\varepsilon$, then, for any system, the preparations $\ket{0}$, $\ket{+}$ are ontologically distinct.

As is shown in ref.\cite{MoseleyPBR}, for any pair $\{ \ket{\psi}, \ket{\phi} \}$ such that $|\bkt{\phi}{\psi}| \leq 1/\sqrt{2}$, there is a 4-outcome experiment such that each outcome is precluded by one of $\{\ket{\psi}\ket{\psi}, \ket{\psi}\ket{\phi}, \ket{\phi}\ket{\psi}, \ket{\phi}\ket{\phi} \}$, and hence our proof includes such cases.  The extension in ref. \cite{PBR} of the  result to arbitrary distinct states makes essential use of the Cartesian Product Assumption.  There does not seem to be an obvious way to use the PUC to demonstrate ontological distinctness of arbitrary distinct pure quantum states.  We leave it as an open question whether this is possible. The argument of the present paper can, however, be invoked to show that, for the case of $\{ \ket{\psi}, \ket{\phi} \}$ with $ 1/\sqrt{2} < |\bkt{\phi}{\psi}| < 1$,  for sufficiently large $N$ the $N$-fold product $\ket{\psi}_1 \ldots \ket{\psi}_N$ is ontologically distinct from $\ket{\phi}_1 \ldots \ket{\phi}_N$.

\section{Relaxing the assumption of arbitrarily strict preclusion} In the preceding section we assumed that arbitrarily close approximations to strict preclusion are possible. This assumption holds in quantum theories, including quantum field theories.  If, however, we do not assume that a putative successor theory will reproduce the quantum predictions precisely, but require only a close approximation to quantum predictions in the domains in which quantum theories are known to work well, there may be in such a theory a bound on how close an approximation to strict preclusion can be achieved.  One would expect that, if this bound is sufficiently small, distinct quantum states must be at least approximately ontologically distinct.  In this section we show that this is the case.

To do this, we need a means of quantifying approximate ontological distinctness.  To do this, we imagine the following game.  We have a system $\Sigma_N$ composed of  $N$ subsystems, each of which is to be subjected to a $\ket{0}$ or $\ket{+}$ preparation, chosen at random, with each preparation equiprobable.  For each preparation, there is a corresponding probability distibution over the ontic state space of $\Sigma_N$; it is assumed that these are known to you.   A preparation of this sort having been performed, you are presented with the ontic state $\lambda$ of the joint system.  A referee picks one of the subsystems at random, and you are asked to guess its preparation, based on knowledge of the ontic state $\lambda$ of the joint system. What is the probability that you will guess correctly?  If there is a strategy such that, with probability one, your guess is correct, this warrants the conclusion that, with probability one, the ontic state $\lambda$ of the whole uniquely determines the quantum state of each subsystem. If there is a strategy such that, with probability close to one, your guess is correct, this warrants the conclusion that, with high probability, the ontic state $\lambda$ of the whole is such that it at least approximately determines the quantum state of most of the subsystems.

In the case of perfect preclusion, on the assumption of the PUC and the Principle of Extendibility, there is a strategy that succeeds with probability one. As we have seen in the previous section, if, for each pair of subsystems $\langle \alpha, \beta \rangle$, there is an experiment such that  each outcome of the experiment rules out one of  $\{ \ket{0}_\alpha \ket{0}_\beta, \ket{0}_\alpha \ket{+}_\beta, \ket{+}_\alpha \ket{0}_\beta, \ket{+}_\alpha \ket{+}_\beta \}$, then this, together with the PUC, entails that, for almost all ontic states $\lambda$, for at least $N - 1$ of the subsystems the quantum state is uniquely determined by the ontic state.  There is, therefore, a strategy that is guaranteed to get at least $N - 1$ guesses right.  The remaining system may have its quantum state left undetermined by the ontic state.  In the absence of any information about the ontic state, a random guess has probability one-half of being correct.  Knowledge of the ontic state cannot decrease the probability on an optimal strategy of guessing  the state of any subsystem correctly.  Therefore, on an optimal strategy for this game, the expectation value of the number of correct guesses must satisfy,
\begin{equation}
\langle \mbox{\# correct guesses} \rangle \geq N - \frac{1}{2}.
\end{equation}
If one of the subsystems is chosen at random, the probability that you have correctly guessed its preparation is
\begin{multline}\label{probN}
P(\mbox{correct guess}) = \frac{1}{N} \langle \mbox{\# correct guesses} \rangle
\\
\geq 1 - \frac{1}{2N}.
\end{multline}
We now assume the Principle of Extendibility, which requires that the state of our composite system be compatible with its being regarded as part of an larger system composed of an arbitrarily high number of subsystems.  On this assumption, inequality (\ref{probN}) must hold for \emph{all} $N$.  Therefore, the probability of correcting guessing the state of a randomly chosen subsystem is
\begin{equation}
P(\mbox{correct guess}) = 1,
\end{equation}
from which it follows that the ontic state of the whole  uniquely determines the quantum state of each of the subsystems.

We now show that, if it is possible to achieve $\varepsilon$-preclusion, for small $\varepsilon$, then the probability of guessing correctly is close to one.

We assume pairwise $\varepsilon$-preclusion.  That is, we assume that there exists $\varepsilon > 0$ such that, for each pair of subsystems $\langle \alpha, \beta \rangle$, there is a $4$-outcome experiment $E^{\alpha \beta}$  such that, for  each outcome $k$  of  $E^{\alpha \beta}$, there are $x, y \in \{0, +  \}$ such that $k$ has probability less than or equal to $\varepsilon$ on any preparation in which $\alpha$ is subjected to $\ket{x}$ and $\beta$ to $\ket{y}$.

For any subsystem $\alpha$, for $x \in \{0, +\}$, let $\bar{P}^{\alpha}_{x}$ be the probability distribution that is an equally weighted mixture of all distributions on which $\alpha$ has been subjected to the $\ket{x}$ preparation, and let $\bar{\mu}^\alpha_x$ be its density function.    For any pair $\langle \alpha, \beta \rangle$, and any $x, y \in \{ 0, + \}$, let $\bar{P}^{\alpha \beta}_{x y}$ be the equally weighted mixture of all distributions on which $\alpha$ has been subjected to the $\ket{x}$ preparation and $\beta$ to the $\ket{y}$ preparation, and let $\bar{\mu}^{\alpha \beta}_{x y}$ be the corresponding density function. We must have, by Theorem \ref{bounds}.
\begin{equation}
\omega(\bar{P}^{\alpha \beta}_{++}, \bar{P}^{\alpha \beta}_{0 0}) \leq 2 \sqrt{\varepsilon}; \quad
\omega(\bar{P}^{\alpha \beta}_{+ 0}, \bar{P}^{\alpha \beta}_{0 +}) \leq 2 \sqrt{\varepsilon}.
\end{equation}
We adopt the following strategy.  Given an ontic state $\lambda$ of $\Sigma_N$,
\begin{itemize}
\item For any subsystem $\alpha$, guess $\ket{0}_\alpha$ if $\bar{\mu}^\alpha_{0}(\lambda) \geq  \bar{\mu}^\alpha_{+}(\lambda)$;  $\ket{+}_\alpha$ otherwise.
\end{itemize}
This has the consequence that, for any pair $\langle \alpha, \beta \rangle$,
\begin{itemize}
\item If your guess is $\ket{+}_\alpha \ket{+}_\beta$, then $\bar{\mu}^{\alpha \beta}_{++}(\lambda) >  \bar{\mu}^{\alpha \beta}_{00}(\lambda)$;
\item If your guess is $\ket{+}_\alpha \ket{0}_\beta$, then $\bar{\mu}^{\alpha \beta}_{+0}(\lambda) >  \bar{\mu}^{\alpha \beta}_{0+}(\lambda)$;
\item If your  guess is $\ket{0}_\alpha \ket{+}_\beta$, then $\bar{\mu}^{\alpha \beta}_{0+}(\lambda) >  \bar{\mu}^{\alpha \beta}_{+0}(\lambda)$;
\item If your guess is $\ket{0}_\alpha \ket{0}_\beta$, then $\bar{\mu}^{\alpha \beta}_{00}(\lambda) \geq  \bar{\mu}^{\alpha \beta}_{++}(\lambda)$.
\end{itemize}
We ask: on this strategy, for any pair $\langle \alpha, \beta \rangle$, what is the probability that the strategy yields incorrect guesses for both members of the pair?

If the preparation is one on which $\alpha$ and $\beta$ are both subjected to the  $\ket{0}$ preparation, then both guesses are incorrect only if  $\bar{\mu}^{\alpha \beta}_{++}(\lambda) >  \bar{\mu}^{\alpha \beta}_{00}(\lambda)$.  Let $\Lambda^{\alpha \beta}_{++}$ be the subset of $\Lambda$  on which this holds, and let $\Omega^{\alpha \beta}_{++}$ be its complement in $\Lambda$.  The probability that both guesses are incorrect is $\bar{P}^{\alpha \beta}_{00}(\Lambda^{\alpha \beta}_{++})$.  We have
\begin{align}
\nonumber \bar{P}^{\alpha \beta}_{00}(\Lambda^{\alpha \beta}_{++}) &= \int_{\Lambda_{++}} \bar{\mu}^{\alpha \beta}_{00}(\lambda) \, d \bar{P}(\lambda)
\\ \nonumber
&\leq \int_{\Lambda_{++}} \bar{\mu}^{\alpha \beta}_{00}(\lambda) \, d \bar{P}(\lambda) + \int_{\Omega^{\alpha \beta}_{++}} \bar{\mu}^{\alpha \beta}_{++}(\lambda) \, d \bar{P}(\lambda)
\\ \nonumber
&= \int_\Lambda \min \left(\bar{\mu}^{\alpha \beta}_{00}(\lambda),\bar{\mu}^{\alpha \beta}_{++}(\lambda)\right) \, d \bar{P}(\lambda)
\\
&= \omega(\bar{P}^{\alpha \beta}_{00}, \bar{P}^{\alpha \beta}_{++}).
\end{align}
Thus, when the preparation is $\ket{0}_\alpha\ket{0}_\beta$, the probability that both guesses are incorrect is less than or equal to $\omega(\bar{P}^{\alpha \beta}_{00}, \bar{P}^{\alpha \beta}_{++})$.  By Theorem \ref{bounds}, this is less than or equal to $2 \sqrt{\varepsilon}$.  The same holds for the other preparations.  Therefore, the probability of two incorrect guesses satisfies,
\begin{equation}
P(\mbox{both guesses incorrect}) \leq 2 \sqrt{\varepsilon}.
\end{equation}
If, out of all the guesses for all $N$ subsystems, our strategy yields  more than one incorrect guess, then for some pair $\langle \alpha, \beta \rangle$, both members of  are guessed incorrectly.  There are $N(N-1)/2$ pairs in all, for each of which the probability of getting both incorrect is at most $2 \sqrt{\varepsilon}$.  ; therefore, the probability of more than one incorrect guess satisfies,
\begin{equation}
P(\mbox{more than one incorrect}) \leq N(N-1)\sqrt{\varepsilon}.
\end{equation}
The expectation value of  number of correct guesses satisfies
\begin{equation}
\langle \mbox{\# correct guesses} \rangle \geq (N-1) (1 - N(N-1)\sqrt{\varepsilon}),
\end{equation}
and the probability that a randomly chosen subsystem is guessed correctly satisfies
\begin{multline}
P(\mbox{correct guess}) \geq (1- 1/N) (1 - N(N-1)\sqrt{\varepsilon})
\\
= 1 - 1/N - (N-1)^2 \sqrt{\varepsilon}.
\end{multline}
If, now, we assume the Principle of Extendibility, this must be true for all $N$.  We can use this to obtain a lower bound on the probability of a correct guess. We have, for all $N$,
\begin{multline}
P(\mbox{correct guess}) \geq 1 - 1/N - (N-1)^2 \sqrt{\varepsilon}
\\ \geq \ 1 - 1/N  -  N^2 \sqrt{\varepsilon}.
\end{multline}

Let $N_\varepsilon$ be the largest integer $N$ such $2 N^3 \sqrt{\varepsilon} \leq 1$.  That is,
\begin{equation}\label{epsineq}
2 N_\varepsilon^3 \sqrt{\varepsilon} \leq 1  \leq 2 (N_\varepsilon + 1)^3 \sqrt{\varepsilon}.
\end{equation}
Then, because of the left-hand inequality in (\ref{epsineq}),
\begin{equation}
N_\varepsilon^2 \sqrt{\varepsilon} \leq 2^{-2/3} \varepsilon^{1/6},
\end{equation}
and, because of the right-hand inequality in (\ref{epsineq}),
\begin{equation}
N_\varepsilon \geq 2^{-1/3} \varepsilon^{-1/6} - 1,
\end{equation}
and so
\begin{equation}
{1}/{N_\varepsilon} + N^2_\varepsilon \sqrt{\varepsilon} \leq \left(\frac{2^{1/3}}{1 - 2^{1/3}\varepsilon^{1/6}} + 2^{-2/3} \right) \varepsilon^{1/6}.
\end{equation}
This gives us our bound.
\begin{align}
P(\mbox{correct guess}) \geq 1 - \left(\frac{2^{1/3}}{1 - 2^{1/3}\varepsilon^{1/6}} + 2^{-2/3} \right) \varepsilon^{1/6}.
\end{align}
To leading order in $\varepsilon$, the right hand side of this is
\begin{equation}
1 - \left(2^{1/3} + 2^{-2/3} \right) \varepsilon^{1/6} \approx 1 - 1.9 \: \varepsilon^{1/6}.
\end{equation}

\section{Appendix} In this section we illustrate the assertion that, even if the CPA is assumed, the PUC is strictly weaker than the NCA, by constructing a simple model of the setup of PBR's example that reproduces the quantum probabilities for the outcomes of the experiment considered, and satisfies the CPA and PUC, but not the NCA.

We adapt Spekkens' toy theory for the purpose \cite{SpekkensToy}.  In Spekkens' theory, an elementary system  consists of four boxes and a ball that can be in any one of the four boxes.  On this theory, nonorthogonal preparations are ontologically indistinct, and product-state preparations on a pair of systems yield independent probabilities for the states of the systems. The four preparations in our example, therefore,  have nonzero common overlap when modelled in this theory, which is incompatible with the quantum preclusions for the experiment considered.

To construct a model that reproduces the correct quantum probabilities for the  experiment considered, we must, therefore, change the theory's rules for assigning probability distributions to preparations. The probability distributions corresponding to these preparations must be ones on which the states of the two subsystems are not probabilistically independent.  It is possible to do so while satisfying the PUC.

Here is one way to do it.  Let the $\ket{0}$ preparation be one in which the boxes $1, 2, 3$ have probabilities $1/2, 1/4, 1/4$, respectively, and the $\ket{+}$ preparation be one on which boxes $2, 3, 4$ have probabilities $1/4, 1/4, 1/2$.  Let the joint preparation distributions be as depicted in Fig \ref{Probs}. In these diagrams, the horizontal axis represents the position of ball $A$, and the vertical axis, the position of ball $B$.  The support of the probability distribution is indicated by the shaded squares, and each shaded square is taken to have equal probability.  On this choice of distributions, $P_{00}$ and $P_{++}$ are ontologically distinct, as they must be in order to recover the preclusions while satisfying the PUC, though each has non-null overlap with the other two distributions.  The same is true of $P_{0+}$ and $P_{+0}$, and so the PUC is satisfied.

To complete our model, it remains to fill in the response probabilities for the  experiment.  We must specify four probability functions $\{p_1, p_2, p_3, p_4 \}$ on the state space of the joint system, in such a way that the preparation distributions we have already specified yield the correct probabilities for the outcome of the experiment, including, crucially, the preclusions.  The simplest way to do this is to adopt deterministic response functions; we partition the state space into four disjoint regions, in each of which one of the four outcomes of the experiment is yielded with certainty.  The constraint of yielding the correct  quantum probabilities leaves considerable arbitrariness in how this is done; in Figure \ref{Response} is exhibited one way of doing it.  The outcome of the experiment is $\ket{\xi_1}$ if both balls are in  box $3$ or $4$, $\ket{\xi_2}$ if ball $A$ is in box 3 or 4 and ball $B$ is in box 1 or 2, $\ket{\xi_3}$ if ball $A$ is in box 1 or 2 and  ball $B$ is in box 3 or 4, and $\ket{\xi_4}$ if both balls are in box 1 or 2.

\begin{figure}[h]
\centering
\begin{subfigure}[b]{0.2\textwidth}{\includegraphics[width=0.8\textwidth]{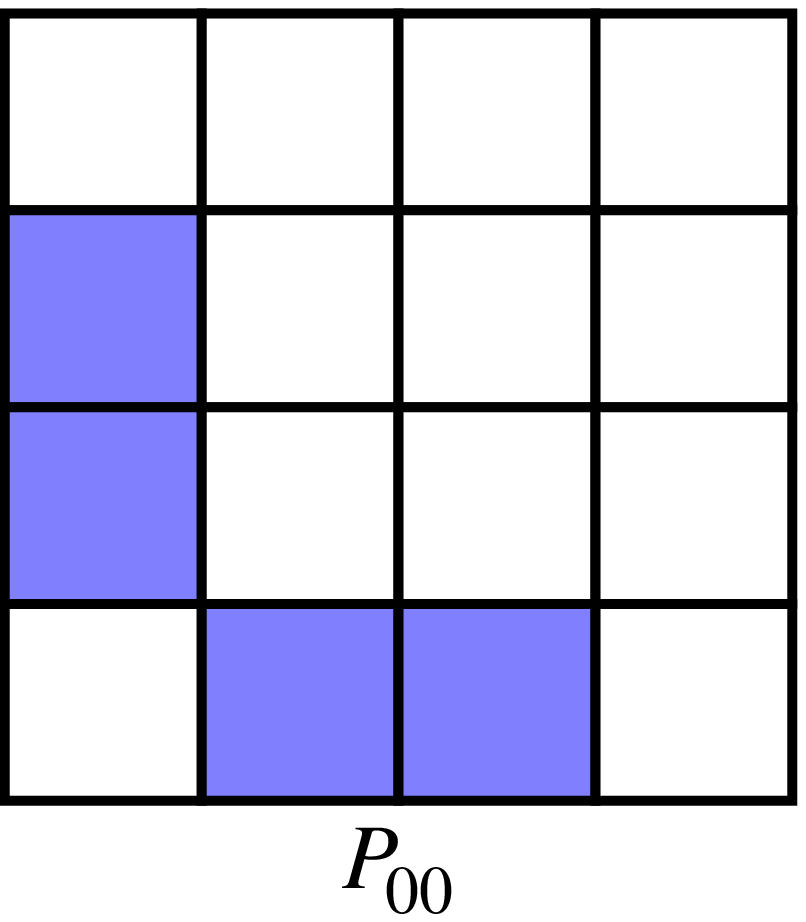}}
\end{subfigure}
\begin{subfigure}[b]{0.2\textwidth}{\includegraphics[width=0.8\textwidth]{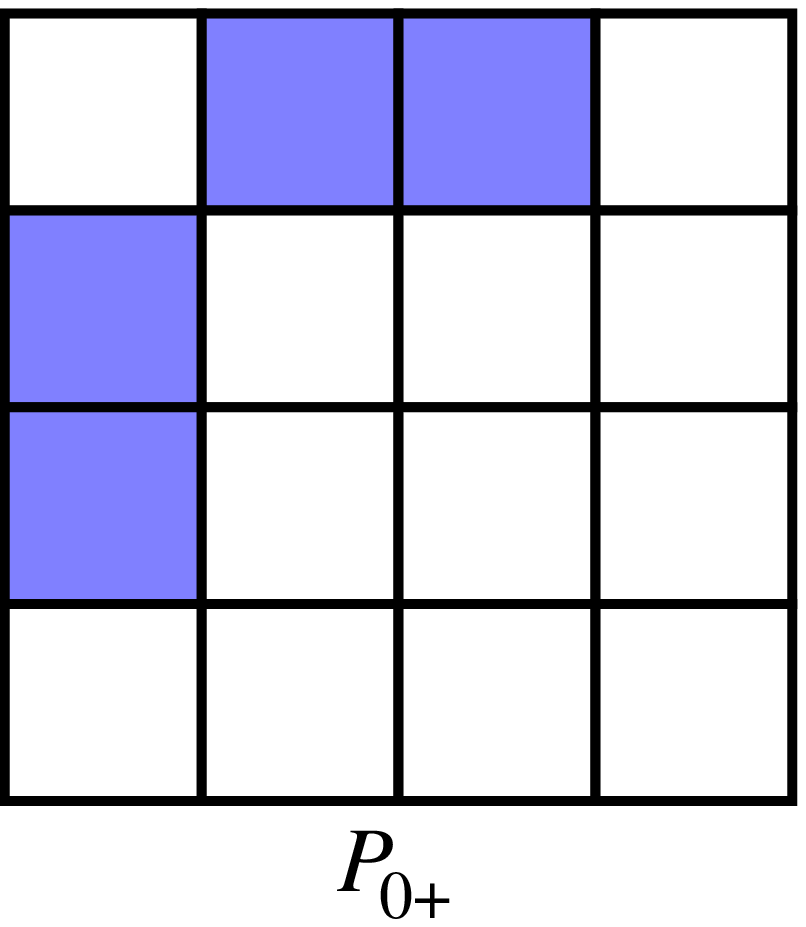}}
\end{subfigure}
\begin{subfigure}[b]{0.2\textwidth}{\includegraphics[width=0.8\textwidth]{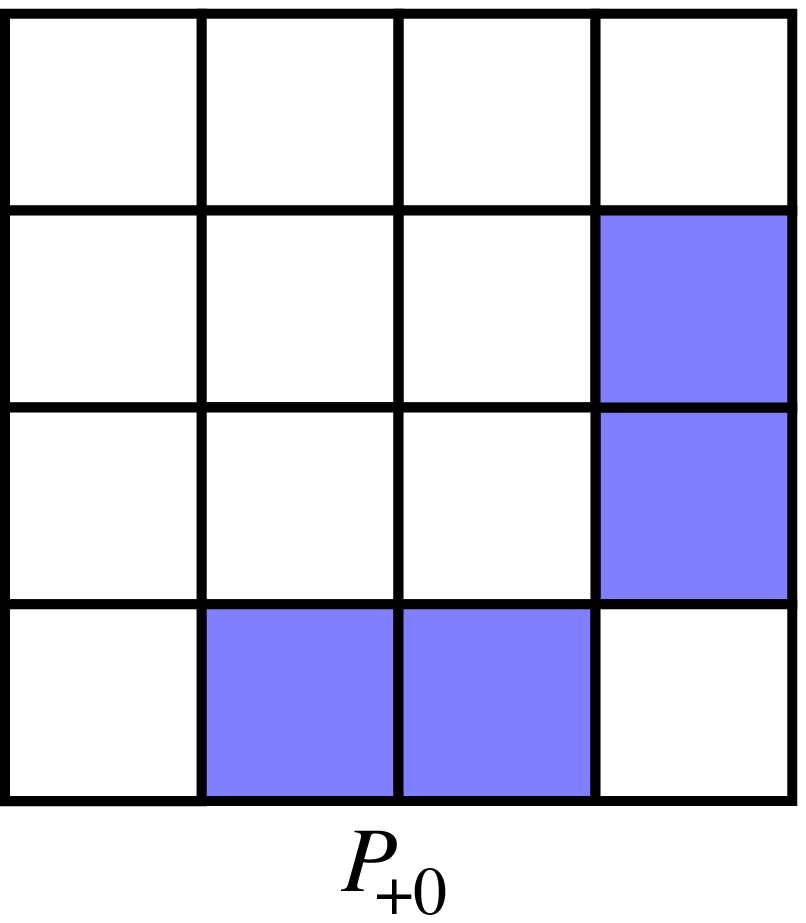}}
\end{subfigure}
\begin{subfigure}[b]{0.2\textwidth}{\includegraphics[width=0.8\textwidth]{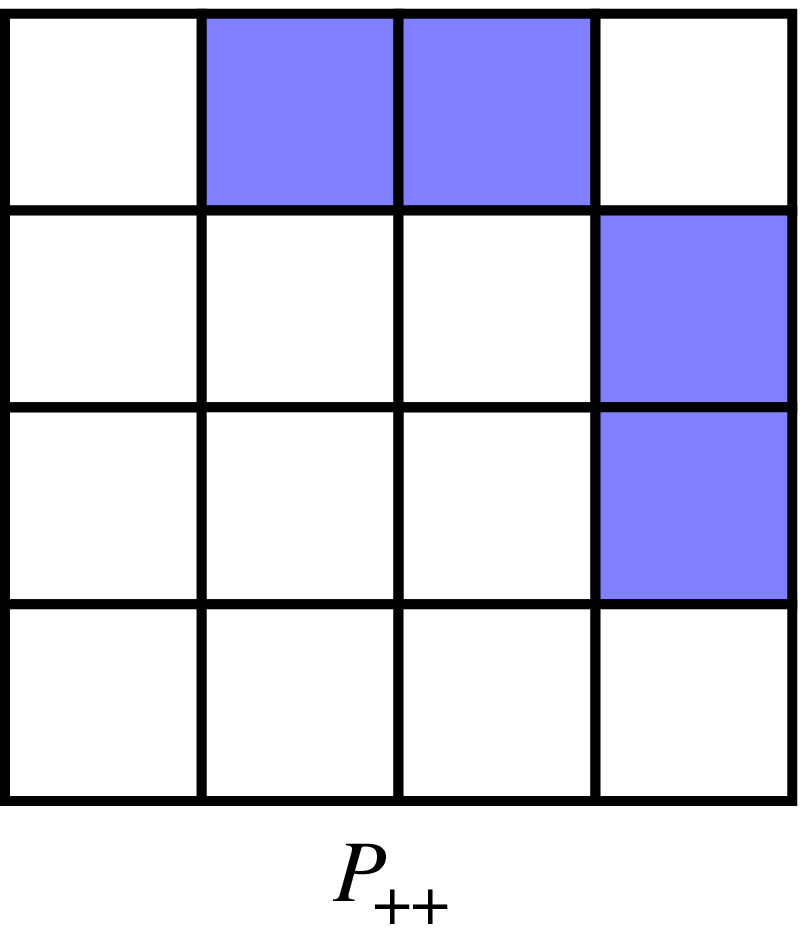}}
\end{subfigure}
\caption{The probability distributions for the example.}\label{Probs}
\end{figure}

\begin{figure}[h]
\centering
{\includegraphics[width=0.16\textwidth]{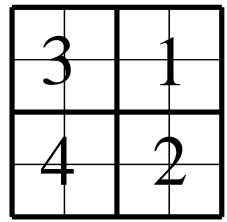}}
\caption{Response regions for the four outcomes.}\label{Response}
\end{figure}

\newpage

\end{document}